\theoremstyle{definition}
\newtheorem{theorem}{Theorem}[section]
\newtheorem{proposition}[theorem]{Proposition}
\newtheorem{definition}[theorem]{Definition}
\newtheorem{lemma}[theorem]{Lemma}
\newtheorem{example}[theorem]{Example}
\newtheorem{remark}[theorem]{Remark}
\title{Common $p$-Belief with Plausibility Measures\\ Extended Abstract}
\author{
Eric Pacuit \qquad\qquad Leo Yang
\institute{Department of Philosophy \\ University of Maryland}
\email{\quad epacuit@umd.edu \quad\qquad leoyang@umd.edu}
}
\begin{document}

\maketitle

\section{Introduction}\label{introduction}

In his seminal 1976 paper, Robert Aumann proved a fascinating result, later called the ``Agreeing to Disagree'' theorem \cite{Aumann1976}.   Suppose that two agents share the same prior probability distribution and update this probability by conditioning on different  private information. 
Aumann showed that if their posterior probabilities of an event $E$ become common knowledge---meaning each agent knows the other's posterior, knows that the other knows it, and so on---then they must assign identical posterior probabilities to $E$. In other words, two Bayesian agents who share a common prior cannot agree (in the sense of having common knowledge) to disagree (in the sense of holding different posterior probabilities).  This result has important implications across multiple disciplines: it underlies ``no-trade" theorems in economics \cite{MilgromStokey1982,SebeniusGeanakoplos1983}, helps to provide an epistemic characterization of Nash equilibria in games with more than two players \cite{AumannBrandenburger19995}, and reveals  subtle issues that arise when rational agents exchange information  \cite{GeanakoplosPolemarchakis1982,ParikhKrasucki1990,Menager2006}.

The literature on Aumann's theorem has focused on identifying which assumptions are necessary to prove the theorem \cite{RubinsteinWolinsky1990,Geanakoplos1994,Lederman2015}. Various generalizations have emerged, exploring scenarios where, for instance, agents have ambiguous beliefs \cite{HalpernKets2014,HalpernKets2015,VergopoulosBillot2024} or use  learning rules other than classical Bayesian conditioning \cite{GilboaSamuelsonSchmeidler2022,HildJeffreyRisse1997}.   This paper is inspired by three important generalizations of Aumann's theorem:

\begin{enumerate}

\item There are several  {\em qualitative} versions of Aumann's theorem. 
One approach focuses on reformulating Aumann's insights in multi-agent logics of belief \cite{DegremontRoy2012,Demey2014}. A second approach, developed by Cave \cite{Cave1983}, Bacharach \cite{Bacharach1985}, and Samet \cite{Samet2010}, reconceptualizes the core problem by shifting focus from probabilities to {\em abstract decisions}. In Aumann's original theorem, the posterior probability of an agent for a fixed event can be viewed as a random variable mapping states to real numbers in the  $[0, 1]$ interval. The core idea is to replace these random variables with decision functions mapping worlds to elements of a set $\mathcal{D}$ of ``decisions''. \footnote{The resulting qualitative agreement theorem establishes that ``like-minded'' agents cannot have common knowledge of different decisions, with the concept of being ``like-minded'' formalized using a variant of Savage's sure-thing principle: For any pair of agents $i$ and $j$,  if $i$ knows that $j$ is at least as knowledgeable as she is, and also knows that $j$’s decision is $d$, then her decision is also $d$ \cite[p. 171]{Samet2022}.}

\item A second direction of research extends Aumann's theorem to generalizations of  classical Bayesian models, including conditional probability structures \cite{Tsakas2018}, lexicographic probability systems \cite{BachPerea2013,BachCabessa2023}, and  imprecise probabilities \cite{ZhangLiuSeidenfeld2018}. By reformulating the agreement theorem within these more general models of beliefs, researchers have shown that Aumann's core insight---that rational agents cannot agree to disagree---remains robust even when relaxing classical probability assumptions.   

\item The third line of research generalizes Aumann's theorem by replacing knowledge with the weaker notion of {\em $p$-belief}.  An agent $p$-believes an event $E$ when assigning a probability of at least $p$ to $E$. In this setting, we can ``approximate" common knowledge with {\em common $p$-belief}.   Monderer and Samet \cite{MondererSamet1989}, later refined by Neeman \cite{Neeman1996}, proved that if agents sharing a common prior have common $p$-belief of their posteriors of some event $E$, then these posteriors must be ``close'', in the sense that they cannot differ by more than $1-p$.   This elegant result shows that if the agents are close to having common knowledge of their posteriors of an event $E$, then these posteriors must also be close.  Then, Aumann's  agreement theorem can be viewed as the special case when $p=1$.

\end{enumerate}

This paper lies in the intersection of these three approaches, aiming to identify the minimal assumptions that any belief model must satisfy to prove a version of the Monderer-Samet-Neeman  agreement theorem.  To accomplish this, we use Halpern's \cite{FriedmanHalpern1995,Halpern2001} (conditional) {\em plausibility measures} to represent the agents' beliefs. These measures provide a very general framework that encompasses many existing formal models of beliefs, including probability measures, possibility measures, ranking functions, and Dempster-Shafer belief functions (see \cite{Halpern2017} for a discussion).  

Our generalized Monderer-Samet-Neeman theorem using conditional plausibility measures provides a unifying perspective that clarifies which properties of belief  are essential for demonstrating that common belief of the posteriors of some event implies these posteriors must be close. This deepens our theoretical understanding of agreement theorems while extending the applicability of the Monderer-Samet-Neeman Theorem to settings where classical probabilistic assumptions may not hold.

This extended abstract is organized as follows. Section \ref{background-Aumann} provides an overview of Aumann's classic agreeing to disagree theorem. We introduce (common) $p$-belief and discuss the  Monderer-Samet-Neeman generalization of Aumann's theorem in Section \ref{sec-common-p-belief}. Section \ref{epistemic-plausibility-models} introduces conditional plausibility measures   \cite{FriedmanHalpern1995,Halpern2001} and states our generalizations of the Monderer-Samet-Neeman agreement theorem.  The  proofs of the results are available in the full version of the paper.  We conclude in Section \ref{conclusion} with a brief discussion of some applications of our results.

\section{Background: Aumann's Agreement Theorem}\label{background-Aumann}
In this section, we briefly introduce the formal definitions needed to state and prove Aumann's agreeing to disagree theorem.  We begin with models representing the knowledge and beliefs of a group of agents.

\begin{definition}[Epistemic Probability Model]\label{epistemic-probability-model} Suppose that $\mathcal{A}$ is a set of agents.  An {\bf epistemic probability model} for $\mathcal{A}$ is a  tuple
$$\langle W,(\Pi_{i})_{i\in\mathcal{A}}, \mathcal{F},(P_i)_{i\in\mathcal{A}} \rangle,$$
where 
    \begin{enumerate}
        \item $W$ is a non-empty set of states; 
        \item For each $i\in\mathcal{A}$,  $\Pi_i$ is a partition of $W$, where for each $w\in W$ and $i\in\mathcal{A}$, we write $\Pi_i(w)$ for the unique element of $\Pi_i$ containing $w$, called agent $i$'s \textbf{information set at $w$};\footnote{A partition of $W$ is a collection of pairwise disjoint sets whose union is $W$.}
        \item $\mathcal{F}\subseteq \wp(W)$ is a \textbf{$\sigma$-algebra} on $W$, meaning that $\mathcal{F}$ is closed under complement and countable union, such that for all $w\in W$ and $i\in\mathcal{A}$, $\Pi_i(w)\in \mathcal{F}$; 
        \item For each $i\in\mathcal{A}$, $P_i$ is a probability measure on $\langle W,\mathcal{F} \rangle$. 
    \end{enumerate}
An epistemic probability model has a \textbf{common prior} if $P_i=P_j$ for all $i,j\in\mathcal{A}$. To simplify notation, we write $\langle W,(\Pi_{i})_{i\in\mathcal{A}}, \mathcal{F},P \rangle$ when $P$ is the common prior.
\end{definition} 
\begin{remark}
    We could also allow a different $\sigma$-algebra $\mathcal{F}_i$ for each $P_i$, but this complexity is unnecessary for the purpose of this paper. 
\end{remark}

Given a state $w$ and an agent $i\in\mathcal{A}$, the set $\Pi_i(w)$ represents $i$'s information at state $w$.  With this interpretation in mind, we can now define the agents' knowledge and posterior beliefs.

\begin{definition}[Knowledge]\label{knowledge}
Suppose that   $\langle W,(\Pi_{i})_{i\in\mathcal{A}},\mathcal{F},(P_i)_{i\in\mathcal{A}} \rangle$  is an epistemic probability model. The {\bf knowledge operator} for agent $i\in\mathcal{A}$ is the function $K_i:\wp(W)\rightarrow\wp(W)$ where: for all $E\subseteq W$,
$$K_i(E)=\{w\ |\ \Pi_i(w)\subseteq E\}.$$
If $w\in K_i(E)$ we say that agent $i$ knows that $E$ in $w$.\footnote{It  is well known that each $K_i$ satisfies the $\mathbf{S5}$ axioms: For all events $E, F\subseteq W$, $K_i(E\cap F)=K_i(E)\cap K_i(F)$, $K_i(E)\subseteq E$, $K_i(E)\subseteq K_i(K_i(E))$, and $\overline{K_i(E)}\subseteq K_i(\overline{K_i(E)})$ (where $\overline{X}$ denotes the complement of $X$ in $W$).}
\end{definition}

\begin{definition}[Posterior Belief]\label{posterior-belief} Given an epistemic probability model $\langle W, (\Pi_i)_{i\in\mathcal{A}}, \mathcal{F},(P_i)_{i\in\mathcal{A}}\rangle$. 
The {\bf posterior belief for $i$ at state $w$} is the (possibly partial) function $P_{i,w}:\wp(W)\rightarrow [0,1]$ where: for all  $E\subseteq W$,
$$P_{i,w}(E)=P_i(E\ |\ \Pi_i(w)) = \frac{P_i(E\cap \Pi_i(w))}{P_i(\Pi_i(w))}.$$ 
\noindent Note that $P_{i,w}(E)$ is undefined if $E\cap F\notin\mathcal{F}$, $\Pi_i(w)\notin\mathcal{F}$, or $P_i(\Pi_i(w))=0$. 
\end{definition}

Suppose that $\langle W, (\Pi_i)_{i\in\mathcal{A}},\mathcal{F},(P_i)_{i\in\mathcal{A}}\rangle$ is an epistemic-probability model. We write $K(E)$ for the event that  everyone knows $E$, and $C(E)$ for the event that $E$ is common knowledge.\footnote{These definitions can be relativized to any subset of agents. For $G\subseteq \mathcal{A}$, let $K_G(E)$ be the event that everyone in $G$ knows $E$, and $C_G(E)$ be the event that $E$ is common knowledge among everyone in $G$. To keep notation minimal, we do not pursue this more general approach in this paper.} Formally, {\em everyone knows} is a function $K:\wp(W)\rightarrow \wp(W)$ where: for all $E\subseteq W$,
$$K(E)=\bigcap_{i\in \mathcal{A}}K_i(E).$$    
Common knowledge of $E$ is the event where everyone knows $E$ and this fact is completely transparent to all agents. This can be made  precise using the concept of a self-evident event:

\begin{definition}[Self-Evident Event]\label{self-evident}
Suppose that $\langle W, (\Pi_i)_{i\in\mathcal{A}},\mathcal{F},(P_i)_{i\in\mathcal{A}}\rangle$ is an epistemic probability model.   An event $E\subseteq W$ is said to be {\bf self-evident for $i$} provided that $E\subseteq K_i(E)$. 
An event is {\bf self-evident} if it is self-evident for all $i\in \mathcal{A}$. 
\end{definition}
Thus, the event $E$ is self-evident when $E$ is closed with respect to the agents' information partitions: for all $i\in\mathcal{A}$ and all $w\in W$, if $w\in E$, then $\Pi_i(w)\subseteq E$.

\begin{definition}[Common Knowledge]\label{common-knowledge} Suppose that $\langle W, (\Pi_i)_{i\in\mathcal{A}},\mathcal{F},(P_i)_{i\in\mathcal{A}}\rangle$ is an epistemic probability model. 
The {\bf common knowledge operator} is the function $C:\wp(W)\rightarrow\wp(W)$ such that, for each $E\subseteq W$,
$$C(E)=\{w\ |\ \mbox{there is a self-evident set $Z$ such that $w\in Z\subseteq E$}\}.$$
 
\end{definition}  
An event $E$ is common knowledge provided that there is a true self-evident event that implies $E$. 
We can now state Aumann's agreeing to disagree theorem. 

\begin{restatable}[Agreeing to Disagree Theorem \cite{Aumann1976}]{theorem}{aumannagreement}\label{agreement-thm}
    Suppose that $\langle W,(\Pi_i)_{i\in\mathcal{A}},\mathcal{F},P \rangle$ is an epistemic probability model with a common prior, and for each $i\in\mathcal{A}$,  $r_i\in [0,1]$.  For any event $E\subseteq W$, if 
    $$C(\{w\mid P_{i,w}(E)=r_i\mbox{ for all }i\in\mathcal{A}\})\neq\emptyset$$
    then $r_i=r_j$ for all $i, j\in\mathcal{A}$. 
\end{restatable}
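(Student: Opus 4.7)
The plan is to exploit the fact that any self-evident event decomposes cleanly into each agent's partition cells, and then use the common prior to force the posteriors to agree. By Definition \ref{common-knowledge}, the hypothesis $C(\{w \mid P_{i,w}(E)=r_i \text{ for all } i\in\mathcal{A}\}) \neq \emptyset$ yields a non-empty self-evident event $Z$ with $Z \subseteq \{w \mid P_{i,w}(E)=r_i \text{ for all } i\in\mathcal{A}\}$. The target identity is that for every $i\in\mathcal{A}$, $P(E\cap Z) = r_i\cdot P(Z)$. Since all agents share the prior $P$, this immediately gives $r_i\cdot P(Z) = r_j\cdot P(Z)$, and dividing by $P(Z)$ yields the theorem.

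To establish the identity, fix $i\in\mathcal{A}$. Self-evidence gives $Z\subseteq K_i(Z)$, so $Z$ is a union of $i$'s cells, say $Z=\bigcup_{A\in\mathcal{P}_i}A$ for some $\mathcal{P}_i\subseteq\Pi_i$. For each $A\in\mathcal{P}_i$ with $P(A)>0$, pick any $w'\in A\subseteq Z$; then $P_{i,w'}(E)=r_i$, which unfolds to $P(E\cap A) = r_i\cdot P(A)$. Cells with $P(A)=0$ contribute nothing to either side. Only countably many of the disjoint cells can have positive measure, so summing the identity over that countable family and absorbing the null cells yields $P(E\cap Z)=r_i\cdot P(Z)$ by countable additivity. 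Finally, $P(Z)>0$ because the posteriors being defined somewhere in $Z$ forces some $w'\in Z$ with $P(\Pi_i(w'))>0$ and $\Pi_i(w')\subseteq Z$.

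The main obstacle is measure-theoretic bookkeeping rather than conceptual content: one must verify that $Z$ and the relevant sub-unions lie in $\mathcal{F}$, and that countable additivity suffices despite $\Pi_i$ being potentially uncountable. These are handled by the standard observation that pairwise disjoint positive-measure cells form an at most countable collection, so the relevant sum collapses to a countable sum of $\mathcal{F}$-measurable sets. Once these routine checks are in place, the substance of the argument, namely that on a self-evident event $Z$ the conditional expectation $P(E\cap Z)/P(Z)$ collapses to the common constant $r_i$, is essentially just the law of total probability; the common prior then forces the $r_i$ to coincide.
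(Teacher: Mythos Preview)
Your argument is correct: it is the classical direct proof of Aumann's theorem, and the measure-theoretic bookkeeping you flag is handled exactly as you indicate (only countably many cells carry positive mass, and each cell lies in $\mathcal{F}$ by assumption).

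The paper does not give a standalone proof of Theorem~\ref{agreement-thm}. Instead it recovers Aumann's result as the $p=1$ instance of the MSN-Theorem (Theorem~\ref{MSN-theorem}), whose proof goes through Lemma~\ref{UsefulLemma1}. When $p=1$, part~4 of that lemma computes $P(E\mid B^1_iCB^1(X))=P(E\mid C(X))=r_i$ for every $i$, which is precisely your identity $P(E\cap Z)=r_i\,P(Z)$ with $Z=C(X)$. The underlying mechanism---$Z$ decomposes as a union of $i$-cells, on each of which the posterior equals $r_i$, and one then sums---is identical in both arguments. The only cosmetic difference is that you work with an arbitrary self-evident witness $Z$ extracted from Definition~\ref{common-knowledge}, whereas the paper's route implicitly takes $Z=C(X)$ itself; since $C(X)$ is the largest such self-evident set, either choice yields the same computation.
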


\section{Common $p$-Belief}\label{sec-common-p-belief}

In this section, we present the generalization of Aumann's agreement theorem (Theorem \ref{agreement-thm}) first proven by Monderer and Samet \cite{MondererSamet1989} and later refined by Neeman \cite{Neeman1996}.    Before stating the theorem, we introduce several key definitions.

\begin{definition}[$p$-Belief]\label{p-belief}
Suppose that $\langle W, (\Pi_i)_{i\in\mathcal{A}},\mathcal{F},(P_i)_{i\in\mathcal{A}}\rangle$ is an epistemic probability model and  $p\in (0,1]$. 
Agent $i$'s {\bf $p$-belief} is the function $B^p_i : \mathcal{F}\to \wp(W)$ 
where: for all $E\subseteq W$,
$$B^p_i(E)=\{w\mid P_{i,w}(E)\geq p\}.$$
If $w\in B_i^p(E)$ we say that $i$ believes $E$ to degree at least $p$ in $w$. 
\end{definition}
\begin{remark}
    Note that $w\notin B^p_i(E)$ if $P_{i,w}(E)$ is undefined.   Moreover, we assume that for all $E\in\mathcal{F}$ and $i\in\mathcal{A}$ and $p\in (0, 1]$, $B_i^p(E)\in\mathcal{F}$.    Note that we can prove that $\mathcal{F}$ is closed under $B_i^p$  if we assume that $W$ is finite or countable.
\end{remark}

The definition of common $p$-belief closely follows the definition of common knowledge (Definitions \ref{self-evident}  and \ref{common-knowledge}).  We first define the \textbf{mutual $p$-belief operator} $B^p: \wp(W) \to \wp(W)$ as follows: for all $E\subseteq W$,
$$B^p(E)=\bigcap_{i\in\mathcal{A}} B^p_i(E).$$ 
Mutual $p$-belief of $E$ is the event where all agents in $\mathcal{A}$ believe  $E$ with probability at least $p$.  Common $p$-belief of $E$ is the event where everyone $p$-believes $E$ and this fact is ``partially transparent" to all agents.

\begin{definition}[$p$-Self-Evident Event]\label{p-self-evident}
Suppose that $\langle W, (\Pi_i)_{i\in\mathcal{A}},\mathcal{F},(P_i)_{i\in\mathcal{A}}\rangle$ is an epistemic probability model. An event $E\subseteq W$ is {\bf $p$-self-evident for agent $i$} if $E\subseteq B^p_i(E)$.  An event is {\bf $p$-self-evident} if it is $p$-self-evident for all  $i\in\mathcal{A}$. 
\end{definition}

\begin{definition}[Common $p$-Belief]\label{common-p-belief}
    Suppose that $\langle W, (\Pi_i)_{i\in\mathcal{A}},\mathcal{F},(P_i)_{i\in\mathcal{A}}\rangle$ is an epistemic probability model.   For $p\in (0, 1]$, the {\bf common $p$-belief operator} is the function  $CB^p:\wp(W)\rightarrow \wp(W)$ where: for all $E\subseteq W$, 
    $$CB^p(E)=\{w\ |\ \mbox{there is a $p$-self-evident event $Z$ such that } w\in Z\subseteq B^p(E)\}.$$
\end{definition}
So, an event $E$ is common $p$-believed when there is a true $p$-self-evident event that implies everyone $p$-believes $E$.  Next, we state a useful lemma that is needed to prove the Moderer–Samet–Neeman generalization of the agreement theorem.

\begin{restatable}{lemma}{usefullemmaone}
\label{UsefulLemma1}
    Suppose that  $\langle W,(\Pi_i)_{i\in\mathcal{A}}, \mathcal{F},(P_i)_{i\in\mathcal{A}}\rangle$ is an epistemic probability model.  Let $E\subseteq W$ be an event, $p\in (0, 1]$, and $r_i\in [0,1]$ for each $i\in \mathcal{A}$. Then, for each $i\in \mathcal{A}$: 
   \begin{enumerate}
   \item $\Pi_i$ partitions $B_i^p(E)$, i.e., $B_i^p(E)=\bigcup \{X\mid X\in \Pi_i\mbox{ and } X\cap B_i^p(E)\neq\emptyset\}$.
    \item $CB^p(E)\subseteq B^p_i CB^p(E)$.
    \item If $B^p_i(E)\neq\emptyset$, then $P_i(E\mid B^p_i(E))\geq p$. 
    \item Let $X=\{w \mid P_{i,w}(E)=r_i\text{ for all }i\in\mathcal{A}\}$. If $ CB^p(X)\neq\emptyset$, then $P_i(E\mid B^p_i CB^p(X))=r_i$.
\end{enumerate}
\end{restatable}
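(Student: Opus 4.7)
The plan is to dispose of the four parts in order, reusing each in later arguments.

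For (1), I would exploit the fact that $P_{i,w}(E)$ depends only on the cell $\Pi_i(w)$: if $w'\in\Pi_i(w)$ then $\Pi_i(w')=\Pi_i(w)$, hence $P_{i,w'}(E)=P_{i,w}(E)$, so membership in $B_i^p(E)$ is constant on each cell. For (2), I would unfold the definition of $CB^p(E)$: if $w\in CB^p(E)$, pick a $p$-self-evident $Z$ with $w\in Z\subseteq B^p(E)$; the same $Z$ witnesses $z\in CB^p(E)$ for every $z\in Z$, so $Z\subseteq CB^p(E)$. Combining $p$-self-evidence $Z\subseteq B_i^p(Z)$ with the monotonicity of $B_i^p$ (immediate from monotonicity of conditional probability) gives $w\in Z\subseteq B_i^p(Z)\subseteq B_i^p(CB^p(E))$.

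For (3), I would use (1) to write $B_i^p(E)=\bigsqcup_{j} X_j$ as a disjoint union of $\Pi_i$-cells, each of positive $P_i$-measure (otherwise $P_{i,w}(E)$ would be undefined on $X_j$, contradicting $X_j\subseteq B_i^p(E)$). Countable additivity forces the indexing to be at most countable. Each $X_j\subseteq B_i^p(E)$ gives $P_i(E\cap X_j)\geq p\,P_i(X_j)$; summing over $j$ and dividing by $P_i(B_i^p(E))>0$ yields $P_i(E\mid B_i^p(E))\geq p$.

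Part (4) is the main obstacle and synthesizes the earlier parts. Setting $Y=CB^p(X)$, part (2) gives $Y\subseteq B_i^p(Y)$ and part (1) writes $B_i^p(Y)$ as a disjoint union of $\Pi_i$-cells. The crux is a cell-chase showing every such cell is contained in $X$: given $w\in B_i^p(Y)$, positivity $P_{i,w}(Y)\geq p>0$ forces some $w'\in Y\cap\Pi_i(w)$; since $Y\subseteq B^p(X)\subseteq B_i^p(X)$ we get $P_{i,w'}(X)>0$, so some $w''\in X\cap\Pi_i(w')=X\cap\Pi_i(w)$ exists; finally $P_{i,w''}(E)=r_i$ depends only on the cell, so $P_{i,w_1}(E)=r_i$ for every $w_1\in\Pi_i(w)$, i.e., $\Pi_i(w)\subseteq X$. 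Hence each cell $X_j\subseteq B_i^p(Y)$ satisfies $P_i(E\cap X_j)=r_i\,P_i(X_j)$; summing over the (countably many) cells and dividing by $P_i(B_i^p(Y))$ yields $P_i(E\mid B_i^p(Y))=r_i$. The delicate part is keeping the chain of witnesses $w'$, $w''$ and their positivity steps straight; once the cells of $B_i^p(Y)$ are shown to lie in $X$, the averaging is routine.
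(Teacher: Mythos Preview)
Your proposal is correct and follows essentially the same approach as the paper's proof sketch: cell-constancy for (1), the $p$-self-evident witness plus monotonicity for (2), a weighted average over cells for (3), and a witness-chase combined with the averaging of (3) for (4). One small slip in (4): you conclude ``i.e., $\Pi_i(w)\subseteq X$'', but $X$ requires $P_{j,w_1}(E)=r_j$ for \emph{every} $j\in\mathcal{A}$, whereas your chain only gives it for the fixed $i$; fortunately you never use that inclusion---what you actually need, and have established, is $P_i(E\mid X_j)=r_i$ on each cell $X_j$ of $B_i^p(Y)$, and the summation goes through. (The paper reaches the same endpoint by first proving $CB^p(X)\subseteq X$ and then using that inclusion directly, which makes the witness chain one step shorter.)
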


\begin{proof}[Proof (Sketch)]
    \begin{enumerate}
    \item For any $E\subseteq W$ and $w, w'\in W$, if $w'\in \Pi_i(w)$, then $P_i(E\mid \Pi_i(w))=P_i(E\mid \Pi_i(w'))$.  Thus, if $X\in\Pi_i$ and $X\cap B_i^p(E)\neq \emptyset$, then $X\subseteq B_i^p(E)$.    This immediately implies that $B_i^p(E)=\bigcup\{X\mid X\in\Pi_i\mbox{ and } X\cap B_i^p(E)\neq\emptyset\}$.
    \item Suppose that $w\in CB^p(E)$.  Then there is a $p$-self-evident event $Z$   such that $w\in Z\subseteq B^p(E)$.   Since $Z$ is $p$-self-evident, we have $Z\subseteq B_i^p(Z)$ and so, $P_{i}(Z\mid \Pi_i(w))\geq p$.   Clearly, $Z\subseteq CB^p(E)$, and so, $P_{i}(CB^p(E)\mid \Pi_i(w))\geq P_{i}(Z\mid \Pi_i(w))\geq p$.  Thus,  $w\in B^p_iCB^p(E)$.
    \item By part 1., $B_i^p(E)=\bigcup\{X\mid X\in\Pi_i\mbox{ and } X\cap B_i^p(E)\neq\emptyset\}$.   First of all, note that since $P_i(\Pi_i(w))>0$ for all $w\in W$, we have that if $B_i^p(E)\neq \emptyset$, then $P_i(B_i^p(E)) > 0$.   Let $\mathcal{E}_i = \{X\mid X\in\Pi_i\mbox{ and } X\cap B_i^p(E)\neq\emptyset\}$.   Then, for each $Y\in\mathcal{E}_i$, we have $P_i(E\mid Y)\geq p$.   Hence,  
    
    $$P_i(E\mid B_i^p(E))=\sum_{Y\in \mathcal{E}_i} P(E\mid Y)\cdot P(Y\mid B_i^p(E))\geq \sum_{Y\in \mathcal{E}_i}  p\cdot P(Y\mid B_i^p(E)) = p\sum_{Y\in \mathcal{E}_i}  P(Y\mid B_i^p(E))=p.$$
        \item Let $w\in CB^p(X)$. 
    Since $p>0$, we have $\Pi_i(w)\cap X\neq \emptyset$ for all $i\in\mathcal{A}$. 
    So $P_{i}(E\mid\Pi_i(w))=r_i$ for all $i\in\mathcal{A}$. 
    Thus, $w\in X$. 
    Therefore,  $CB^p(X)\subseteq X$.     By similar reasoning, we obtain that $P_{i,w}(E)=r_i$ for all $w\in B^p_iCB^p(X)$. 
    Then, we obtain the desired result using reasoning similar to part 3. since $\Pi_i$ partitions $B_i^p CB^p(X)$.
    \end{enumerate}
 \end{proof}

We can now state the Monderer-Samet-Neeman agreement theorem which, for brevity, we call  the {\bf MSN-Theorem}.
\begin{restatable}[MSN-Theorem \cite{MondererSamet1989,Neeman1996}]{theorem}{msntheorem}
\label{MSN-theorem}
    Suppose that $\langle W,(\Pi_i)_{i\in\mathcal{A}},\mathcal{F}, P \rangle$ is an epistemic probability model with a common prior, $p\in (0, 1]$, and for each $i\in \mathcal{A}$,  $r_i\in[0,1]$. For any event $E\subseteq W$, if 
    $$CB^p(\{w \mid P_{i,w}(E)=r_i \text{ for all }i\in\mathcal{A}\})\neq \emptyset$$ 
then $|r_i-r_j| \leq 1-p$ for all $i,j\in\mathcal{A}$.
    
\end{restatable}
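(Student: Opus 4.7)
The plan is to use Lemma \ref{UsefulLemma1} to reduce the theorem to a short calculation with the common prior. Let $X = \{w \mid P_{i,w}(E) = r_i \text{ for all } i \in \mathcal{A}\}$, $C = CB^p(X)$, and $F_i = B_i^p(C)$; by hypothesis $C \neq \emptyset$, and part 2 of the lemma gives $C \subseteq F_i$ for every $i$. Picking any $w \in C$, we then have $P(C \cap \Pi_i(w)) \geq p \cdot P(\Pi_i(w)) > 0$, so $P(C) > 0$. Using the common-prior hypothesis, part 4 yields $P(E \mid F_i) = r_i$ for each $i$, and applying part 3 to the $p$-self-evident event $C$ gives $P(C \mid F_i) \geq p$, which rearranges to
$$P(F_i \setminus C) \leq \frac{1-p}{p} \cdot P(C).$$

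The rest is elementary algebra. Set $a = P(E \cap C)$, $b = P(C) > 0$, $c_i = P(E \cap (F_i \setminus C))$, and $d_i = P(F_i \setminus C)$, so that $r_i = (a + c_i)/(b + d_i)$ subject to $0 \leq c_i \leq d_i \leq \frac{1-p}{p} b$ and $0 \leq a \leq b$. Since $a \leq b$, the map $d \mapsto (a+d)/(b+d)$ is nondecreasing and the map $d \mapsto a/(b+d)$ is nonincreasing; evaluating each at the extreme $d = \frac{1-p}{p} b$ confines $r_i$ to the interval $\left[\frac{ap}{b},\ \frac{ap}{b} + (1-p)\right]$, whose endpoints do not depend on $i$. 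Hence $|r_i - r_j| \leq 1-p$ for all $i, j \in \mathcal{A}$.

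I expect the main obstacle to be obtaining the \emph{sharp} bound $1-p$ in the last step. The naive triangle inequality through $P(E \mid C) = a/b$ only yields $|r_i - r_j| \leq 2(1-p)$; the sharper bound requires observing that the joint constraint $c_i \leq d_i$ forces $r_i$ into an interval of length $1-p$, rather than into a ball of radius $1-p$ around a common center.
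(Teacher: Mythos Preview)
Your argument is correct and follows the line the paper attributes to Monderer--Samet--Neeman: invoke Lemma~\ref{UsefulLemma1} to get $P(E\mid F_i)=r_i$ and $P(C\mid F_i)\ge p$, then trap each $r_i$ in the agent-independent interval $[pa/b,\,pa/b+(1-p)]$. Your raw-probability bookkeeping with $a,b,c_i,d_i$ is equivalent to applying the multiplicative identity $P(A\mid C)=P(A\mid B)\cdot P(B\mid C)$ (for $A\subseteq B\subseteq C$) that the paper singles out, since $a/b=P(E\mid C)$ and $b/(b+d_i)=P(C\mid F_i)$ recover exactly the same bounds.

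The paper also sketches a second proof, via Lemma~\ref{CP6InStandardProbabilitySetting}, which you do not use. There one observes that $P(B_i^pCB^p(X))$ and $P(B_j^pCB^p(X))$ are totally ordered, so for every $F\subseteq C$ the conditionals $P(F\mid F_i)$ and $P(F\mid F_j)$ are ordered the same way, in particular for $F=E\cap C$ and $F=C\setminus E$; decomposing $r_i=P(E\cap C\mid F_i)+P(E\cap(F_i\setminus C)\mid F_i)$ and $1-r_i$ analogously then yields $|r_i-r_j|\le 1-p$ by a direct pairwise comparison rather than via a common interval. The point of that route is that it avoids any multiplicative structure, which is precisely what the paper needs for the plausibility-measure generalization in Theorem~\ref{AgreementWithoutMultiplication}; your approach corresponds instead to the multiplicative generalization in Theorem~\ref{AgreementWithMultiplication}.
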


The original proof of Monderer, Samet, and Neeman's   uses Lemma \ref{UsefulLemma1} and the following fact about probability measures $P$: For all events $A, B, C$,  if $A\subseteq B\subseteq C$, $P(B) > 0$, and $P(C)> 0$, then $P(A\mid B)\cdot P(B\mid C) = P(A\mid C)$. In the full version of the paper, we give a second proof that avoids explicit mention of multiplication, appealing instead to the following Lemma:   
  
 \begin{restatable}{lemma}{comparisonlemma}
  \label{CP6InStandardProbabilitySetting}
    Suppose that $\langle W,(\Pi_i)_{i\in\mathcal{A}},\mathcal{F}, P \rangle$ is an epistemic probability model with a common prior.   For all events  $E, X\subseteq W$ and $p\in (0, 1]$, if  $CB^p(X)\neq \emptyset$, then  for all $i, j\in \mathcal{A}$, either
 \begin{enumerate}
 \item $P(CB^p(X) - E\mid B^p_i CB^p(X))\leq P(CB^p(X) - E\mid B^p_j CB^p(X))$, and
 \item $P(E\cap CB^p(X)\mid B^p_i CB^p(X)) \leq P(E\cap CB^p(X)\mid B^p_j CB^p(X))$.
 \end{enumerate}
or
 \begin{enumerate}
\item[3.] $P(CB^p(X) - E\mid B^p_i CB^p(X))\geq P(CB^p(X) - E\mid B^p_j CB^p(X))$, and
\item[4.] $P(E\cap CB^p(X)\mid B^p_i CB^p(X)) \geq P(E\cap CB^p(X)\mid B^p_j CB^p(X))$.
\end{enumerate}
\end{restatable}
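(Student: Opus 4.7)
The plan is to exploit Lemma~\ref{UsefulLemma1}(2), which gives $CB^p(X) \subseteq B_i^p CB^p(X)$ for every agent $i$. This containment causes the two conditional probabilities in the lemma to depend on the agent index only through a common denominator, so that their pairwise comparisons across $i$ and $j$ are forced to move in lockstep.

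First, I would unpack the conditionals. Since $E \cap CB^p(X) \subseteq CB^p(X) \subseteq B_i^p CB^p(X)$, intersecting the numerator event with the conditioning event is redundant, and
\[
P(E \cap CB^p(X) \mid B_i^p CB^p(X)) \;=\; \frac{P(E \cap CB^p(X))}{P(B_i^p CB^p(X))},
\]
with the analogous simplification for $CB^p(X) - E$. Before using these expressions I would verify well-definedness: for any $w \in CB^p(X)$, membership in a $p$-self-evident subset of $B^p(X)$ forces $P(\Pi_i(w)) > 0$ (otherwise $w$ could not lie in $B_i^p(X)$); combined with $\Pi_i(w) \subseteq B_i^p CB^p(X)$ from Lemma~\ref{UsefulLemma1}(1), this yields $P(B_i^p CB^p(X)) > 0$.

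Writing $\alpha = P(E \cap CB^p(X))$, $\beta = P(CB^p(X) - E)$, and $c_k = P(B_k^p CB^p(X))$, the four quantities appearing in the lemma become $\alpha/c_i$, $\beta/c_i$, $\alpha/c_j$, $\beta/c_j$. Since $\alpha$ and $\beta$ are agent-independent non-negative scalars, the direction of the comparison is governed entirely by which of $c_i$ and $c_j$ is larger: if $c_i \geq c_j$ then clauses~1 and~2 hold simultaneously, and if $c_i \leq c_j$ then clauses~3 and~4 hold. The degenerate cases $\alpha = 0$ or $\beta = 0$ are handled automatically, since equality satisfies both $\leq$ and $\geq$.

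The main obstacle is conceptual rather than computational: recognizing that Lemma~\ref{UsefulLemma1}(2) is exactly what strips the asymmetry between the two numerator events $E \cap CB^p(X)$ and $CB^p(X) - E$, leaving only the scalar $c_i$ to mediate the agent's influence on both conditional probabilities. Notably, the argument never invokes the multiplicative identity $P(A \mid B)\cdot P(B \mid C) = P(A \mid C)$, which is precisely the feature that makes this lemma an appropriate substitute for that identity in the qualitative MSN-argument flagged immediately after the statement.
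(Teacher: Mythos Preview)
Your argument is correct and follows essentially the same route as the paper: both proofs use $CB^p(X)\subseteq B_i^pCB^p(X)$ to strip the numerator of any agent-dependence, reducing the comparison to the denominators $P(B_i^pCB^p(X))$ versus $P(B_j^pCB^p(X))$. The only cosmetic difference is that the paper first compares the conditionals $P(CB^p(X)\mid B_i^pCB^p(X))$ and then extracts the denominator inequality, whereas you compare the denominators directly; your version is slightly more streamlined, while the paper's phrasing is chosen to mirror the shape of property~\textbf{CP7} in the plausibility setting.
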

\begin{proof} 
Suppose that $\langle W,(\Pi_i)_{i\in\mathcal{A}}, P \rangle$ is an epistemic probability model with a common prior.   Let $E, X\subseteq W$ and $p\in (0, 1]$.  We first note that for all $i, j\in\mathcal{A}$, either 
\begin{center}$P(CB^p(X)\mid B^p_iCB^p(X))\geq P(CB^p(X)\mid B^p_jCB^p(X))$ or $P(CB^p(X)\mid B^p_jCB^p(X))\geq P(CB^p(X)\mid B^p_iCB^p(X))$.\end{center}   Without loss of generality, let $0\neq P(CB^p(X)\mid B^p_iCB^p(X))\geq P(CB^p(X)\mid B^p_jCB^p(X))$.   Since $CB^p(X)\subseteq B^pCB^p(X)$, this means that: 
    $$0\neq \frac{P(CB^p(X))}{P(B^p_iCB^p (X))}\geq \frac{P(CB^p(X))}{P(B^p_j CB^p (X))}.$$
This implies that: 
    $$P(B^p_iCB^p(X))\leq P(B^p_jCB^p(X))$$
Thus, for all $F\subseteq B^p_iCB^p(X)\cap B^p_jCB^p(X)$, we have $P(F\mid B^p_iCB^p(X))\geq P(F\mid B^p_jCB^p(X))$.   In particular, this holds for $F=CB^p(X)-E$ and $F=CB^p(X)\cap E$.   This completes the proof.   
\end{proof}

 As we will see in Section \ref{main-theorems}, these different proofs suggest two ways to generalize the  MSN-Theorem. 

\section{Epistemic Plausibility Models}\label{epistemic-plausibility-models}

In this section, we introduce \textbf{epistemic plausibility models}. These models generalize epistemic probability models (Definition \ref{epistemic-probability-model})  by replacing probabilities with the more general notion of a \emph{conditional plausibility measure}.

To define a conditional plausibility measure, we first fix a non-empty set $D$ of \textbf{plausibility values} satisfying the following two conditions. 
First, it is partially ordered by a relation $\leq_D\subseteq D\times D$. 
That is, $\leq_D$ is reflexive (for all $d\in D$, $d\le_D d$), transitive (for all $d,d',d''\in D$, if $d\le_D d'$ and $d'\le_D d''$, then $d\le_D d''$), and antisymmetric (for all $d,d'\in D$, if $d\le_D d'$ and $d'\le_D d$, then $d=d'$). Second, $D$ contains minimal and maximal elements $\bot_D$ and $\top_D$ respectively, such that $\bot_D\le_D d\le_D \top_D$ for all $d\in D$. As usual, we write $d<_D d'$ when $d\le_D d'$ and $d\ne d'$. When $D$ is clear from context, we omit subscripts and write  $\le$, $<$, $\top$, and $\bot$. 

A conditional plausibility measure for an agent $i$ assigns an element from $d\in D$ to a pair of events $X$ and $Y$, denoted $Pl_i(X\mid Y)=d$  meaning that according to $i$, the plausibility of $X$ conditional on $Y$ is $d$.   The domain of a conditional plausibility measure is a {\bf Popper algebra}.   Suppose that $W$ is a nonempty set.   A collection of subsets $\mathcal{F}\subseteq \wp(W)$ (where $\wp(W)$ is the powerset of $W$) is called an {\bf algebra} when $W\in\mathcal{F}$, $\mathcal{F}$ is closed under union (if  $X, Y\in \mathcal{F}$, then $X\cup Y\in\mathcal{F}$), and closed under complement (if $X\in\mathcal{F}$ then $\overline{X}\in\mathcal{F}$, where $\overline{X}$ denotes the complement of $X$ in $W$).  An algebra $\mathcal{F}$ is called a {\bf $\sigma$-algebra} when it is closed under countable unions. 
Then, a set $\mathcal{F}\times\mathcal{F}^\prime$ is a {\bf Popper algebra} if: 
\begin{enumerate}
\item $\mathcal{F}$ is an $\sigma$-algebra; 
\item $\mathcal{F}'\subseteq \mathcal{F}$; and
\item for $X\in\mathcal{F}'$, if $X\subseteq Y$ and $Y\in\mathcal{F}$, then $Y\in\mathcal{F}'$.
\end{enumerate}
\noindent The intended interpretation is that  $\mathcal{F}$ is the set of events that are assigned plausibility values, and $\mathcal{F}^\prime$ is the set of events that can be learned.   A conditional plausibility measure satisfies  very minimal properties: 

\begin{definition}[Conditional Plausibility Structure] Suppose that $D$ is a partially  ordered set  of plausibility values (with ordering $\le_D$).   A {\bf conditional plausibility structure} on $D$ is a tuple $\langle W, \mathcal{F}, \mathcal{F}', (Pl_i)_{i\in\mathcal{A}}\rangle$, where $W\ne\emptyset$, $\mathcal{F}\times \mathcal{F}'$ is a Popper algebra on $W$, and for each $i\in\mathcal{A}$, $Pl_i:\mathcal{F}\times \mathcal{F}'\rightarrow D$ satisfies: For all $X, Y\in \mathcal{F}$ and $E\in\mathcal{F}'$: 
\begin{description}
\item[(CP1)] $Pl_i(E\ | E)=\top$,
\item[(CP2)] $Pl_i(\emptyset\ |\ E)=\bot$,
\item[(CP3)] If $X\subseteq Y$, then $Pl_i(X\ |\ E)\le Pl_i(Y\ |\ E)$, and
\item[(CP4)]  $Pl_i(X\ |\ E) = Pl_i(X\cap E\ |\ E)$.
\end{description} 
\noindent A conditional plausibility structure is {\bf acceptable} if for any $E\in \mathcal{F}$ and $F\in\mathcal{F}^\prime$, we have $E\cap F\in\mathcal{F}^\prime$ whenever $Pl_i(E\mid F)\neq \bot$ for some $i\in\mathcal{A}$. 
A conditional plausibility structure is said to have a {\bf common prior} if $Pl_i=Pl_j$ for all $i,j\in\mathcal{A}$. 
In this case, we  write it as $\langle W, \mathcal{F}, \mathcal{F}^\prime, Pl\rangle$. 
\end{definition}

The properties {\bf CP1}–{\bf CP4} are minimal conditions satisfied by most formal models of belief \cite{sep-formal-belief,Halpern2017}.  Property {\bf CP1} states that any event conditioned on itself receives the maximal plausibility value, while {\bf CP2} requires assigning the minimal plausibility value to the empty set $\emptyset$. Property {\bf CP3} ensures that the plausibility measures are monotonic, and {\bf CP4} captures a minimal condition that any sensible notion of conditional belief should satisfy. {\bf Acceptability} generalizes the following property of standard probability measures $P$: if $P(E\mid F)\neq 0$, then $P(E\cap F)\neq 0$;  and so,  conditioning on $E\cap F$ is well-defined.

We are now ready to define analogues of epistemic probability models (Definition \ref{epistemic-probability-model}), posterior belief (Definition \ref{posterior-belief}), and $p$-belief operators (Definition \ref{p-belief}), replacing probability measures with conditional plausibility measures.   So, for instance, we write $B_i^d(E)$ for the event that $i$ assigns the plausibility value at least $d$ (according to the ordering $\geq_D$) to the event $E$. 

\begin{definition}[Epistemic Plausibility Model]\footnote{Cf. the epistemic probability models defined in \cite{BaltagSmets2008}.}\label{ep-plaus-model} An {\bf epistemic plausibility model} is a tuple $$\langle W, (\Pi_i)_{i\in\mathcal{A}}, \mathcal{F}, \mathcal{F}', (Pl_i)_{i\in\mathcal{A}}\rangle,$$ where  $\langle W,  \mathcal{F}, \mathcal{F}', (Pl_i)_{i\in\mathcal{A}}\rangle$ is a conditional plausibility model such that for all $w\in W$ and $i\in\mathcal{A}$, $\mbox{$\Pi_i(w)\in\mathcal{F}'$}$, and for each $i\in\mathcal{A}$, $\Pi_i$ is a partition of $W$.    We say that an  epistemic plausibility model has a common prior if its conditional plausibility structure has a common prior.

 \end{definition}

\begin{definition}[Posterior Belief] Suppose that $\langle W, (\Pi_i)_{i\in\mathcal{A}}, \mathcal{F}, \mathcal{F}', (Pl_i)_{i\in\mathcal{A}}\rangle$ is an epistemic plausibility model. The {\bf posterior belief  for $i$ at state $w$} is   the function $Pl_{i, w}:\mathcal{F}\rightarrow D$, where,   for all  $X\in\mathcal{F}$,  $$Pl_{i,w}(X)=Pl(X\ |\ \Pi_i(w)).$$
\end{definition}
\begin{remark} For each agent $i\in\mathcal{A}$ and state $w\in W$, the posterior belief function $Pl_{i,w}$ is a \textbf{plausibility measure}: a function $\mu$ from a $\sigma$-algebra to a partially ordered set $D$ satisfying $\mu(W)=\top$ (cf. \textbf{CP1}), $\mu(\emptyset)=\bot$ (cf. \textbf{CP2}), and monotonicity (cf. \textbf{CP3}): if $X\subseteq Y$, then $\mu(X)\leq \mu(Y)$. \end{remark}

 \begin{definition}[$d$-Belief Operator] Suppose that $\langle W, (\Pi_i)_{i\in\mathcal{A}}, \mathcal{F}, \mathcal{F}', (Pl)_{i\in\mathcal{A}}\rangle$ is an epistemic plausibility model for a set $D$ of plausibility values.   For each $d\in D$ and $i\in\mathcal{A}$, a {\bf $d$-belief operator} for $i$ is the function  $B_i^d:\mathcal{F}\rightarrow \wp(W)$, where for all $X\subseteq W$, $$B_i^d(X)=\{w\ |\ Pl_i(X\ |\ \Pi_i(w))\ge d\}.$$
\end{definition}

\begin{remark} In the remainder of this paper, we assume the following: In any epistemic plausibility model $\langle W, (\Pi_i)_{i\in\mathcal{A}}, \mathcal{F}, \mathcal{F}', (Pl_i)_{i\in\mathcal{A}}\rangle$,   for all $E\in\mathcal{F}$, we have that $B_i^d(E)\in\mathcal{F}'$.
\end{remark}
With these definitions in place, the notion of common $p$-belief (Definition \ref{common-p-belief}) extends straightforwardly to epistemic plausibility models; we omit the explicit definition here for brevity.

\section{MSN-Theorems in Epistemic Plausibility Structures}\label{main-theorems}

In this section, we present two distinct generalizations of the MSN-Theorem (Theorem \ref{MSN-theorem}) using epistemic plausibility models (Definition \ref{epistemic-plausibility-models}). Our goal is to identify the minimal properties necessary to prove these generalized versions of the MSN-Theorem.

The first step is to equip plausibility values with a notion of {\em addition}. Indeed, both Aumann's Theorem (Theorem \ref{agreement-thm}) and the MSN-Theorem (Theorem \ref{MSN-theorem}) crucially rely on conditions involving the addition of plausibility values. Formally, this involves specifying a function $\oplus: D\times D\rightarrow D$ that maps each pair $\langle d,f\rangle\in D\times D$ to an element $d\oplus f\in D$, satisfying at least the following minimal conditions:

\begin{definition}[Additive Conditional Plausibility Structure] 
\label{AdditiveConditionalPlausibilityStructure} A conditional plausibility structure\\ $\langle W, \mathcal{F}, \mathcal{F}', (Pl_i)_{i\in\mathcal{A}}\rangle$ is said to be an  {\bf additive conditional plausibility structure} when there is a (partial) function $\oplus:D\times D\to D$ such that for each $i\in\mathcal{A}$,\footnote{If $F$ is a (partial) function, we write $dom(F)$ for the domain of $F$.} 
\begin{description}

\item[(A1)] For all $a,b,c\in D$, if $\langle a, b\rangle\in dom(\oplus)$, then $c\leq b$ iff $\langle a,c\rangle\in dom(\oplus)$ and $a\oplus c\leq a\oplus b$. Similarly, if $\langle b, a\rangle\in dom(\oplus)$, then $c\leq b$ iff  $\langle c,a\rangle\in dom(\oplus)$ and $c\oplus a \leq b\oplus a$. 

\item[(A2)] For all $a\in D$, there exists $b\in D$ such that $a\oplus b=\top$.

\item[(A3)] For all $X,Y\in\mathcal{F}$ and $Z\in \mathcal{F}^\prime$, if $X\cap Y=\emptyset$, then $Pl_i(X\cup Y\mid Z)=Pl_i(X\mid Z)\oplus Pl_i(Y\mid Z)$, 

\item[(A4)] For all $E\in \mathcal{F}$ and sets of disjoint sets $\mathcal{S}\subseteq \mathcal{F}^\prime$, if $a\leq Pl_i(E\mid S) \leq b$ for all $S\in\mathcal{S}$, then $$a\leq Pl_i(E\mid \bigcup \mathcal{S})\leq b.$$

\end{description}

\end{definition}
We say that the epistemic plausibility model $\langle W, (\Pi_i)_{i\in\mathcal{A}}, \mathcal{F}, \mathcal{F}', (Pl_i)_{i\in\mathcal{A}}\rangle$ is {\bf additive} when the  underlying conditional plausibility structure $\langle W, \mathcal{F}, \mathcal{F}', (Pl_i)_{i\in\mathcal{A}}\rangle$ is additive. 

Properties {\bf A1} and {\bf A2} concern the binary operation $\oplus$. Property {\bf A1} states that $\oplus$ is monotonic in both arguments, while {\bf A2} requires each element to have an additive inverse. Properties {\bf A3} and {\bf A4} describe additive properties of the conditional plausibility measures. Specifically, {\bf A3} requires that each conditional plausibility measure $Pl_i$ is finitely additive. Property {\bf A4}  captures a crucial requirement needed to prove the agreement theorems (Theorems \ref{agreement-thm} and \ref{MSN-theorem}).   

\begin{example}  Let $\langle W,\mathcal{F},\mathcal{F}^\prime,Pl\rangle$ be a conditional plausibility structure where: 
\begin{itemize}
    \item $W=\{w_1,w_2,w_3\}$;
    \item $D=\{\frac{n}{6}\mid n\in\{0,1,2,\dots,6\}\}$;
    \item $\oplus=+$;
    \item $\mathcal{F}=\wp(W)$ and $\mathcal{F}^\prime=\wp(W)-\{\emptyset\}$;
    \item For all $E\in \mathcal{F}$ and $F\in\mathcal{F}^\prime$, $Pl(E\mid F)=\frac{\mid E\cap F\mid}{\mid F\mid}$. 
\end{itemize}
\noindent It is not difficult to see that this is an additive conditional plausibility structure by checking that it satisfies {\bf CP1} - {\bf CP4} and {\bf A1} - {\bf A4}. 
\end{example}

There are two important consequences of additive conditional plausibility models. First, we obtain an analogue of Lemma \ref{UsefulLemma1}. The statement  is omitted since it is an easy variation of Lemma \ref{UsefulLemma1}, and its proof is an immediate consequence of \textbf{A4}. The second consequence is that we can introduce a notion of subtraction. This is essential because the MSN-Theorem is stated in terms of differences between plausibility values. The existence of a subtraction is a direct consequence of \textbf{A1}:

\begin{lemma}[Existence of a Subtraction] \label{subtraction}   Suppose that   $\langle W, \mathcal{F}, \mathcal{F}', (Pl_i)_{i\in\mathcal{A}}\rangle$ is an additive conditional plausibility structure on a set $D$ of plausibility values.  For all $a, b \in D$,  if there is a $c\in D$, such that $a\oplus c = b$, then this $c$ is unique.  When such a  $c$ exists, we denoted it by $b\ominus a$.
\end{lemma}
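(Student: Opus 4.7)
The plan is to derive uniqueness directly from axiom \textbf{A1} together with antisymmetry of $\leq_D$. Fix $a,b\in D$ and suppose two candidates $c,c'\in D$ both satisfy $a\oplus c=b=a\oplus c'$; the goal is to conclude $c=c'$. The natural strategy is to extract the two inequalities $c\leq c'$ and $c'\leq c$ from \textbf{A1} applied in each direction, and then invoke antisymmetry.

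For the first inequality, note that $\langle a,c\rangle\in dom(\oplus)$ since $a\oplus c=b$ is defined. Instantiate the first clause of \textbf{A1} by reading its variables $(a,b,c)$ as $(a,c,c')$: the hypothesis $\langle a,c\rangle\in dom(\oplus)$ delivers the biconditional $c'\leq c \iff \langle a,c'\rangle\in dom(\oplus) \text{ and } a\oplus c'\leq a\oplus c$. The right-hand side holds because $\langle a,c'\rangle\in dom(\oplus)$ (witnessed by $a\oplus c'=b$) and $a\oplus c'=a\oplus c$, whence trivially $a\oplus c'\leq a\oplus c$. Thus $c'\leq c$. A symmetric application, now reading \textbf{A1}'s variables as $(a,c',c)$, yields $c\leq c'$. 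Antisymmetry of $\leq_D$ then gives $c=c'$, as required.

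I do not foresee a genuine obstacle: the argument is essentially a one-line cancellation law extracted from the monotonicity biconditional in \textbf{A1}. The only bookkeeping point worth stressing in the write-up is that \textbf{A1} states two parallel clauses (addition on the right and addition on the left), and because $\oplus$ is not assumed commutative one must be careful to use the clause that matches the order of the arguments in the hypothesis $a\oplus c=b$; here the first clause suffices. No appeal to \textbf{A2}, \textbf{A3}, or \textbf{A4} is needed, which is reassuring since the lemma is meant to be a pure consequence of \textbf{A1}.
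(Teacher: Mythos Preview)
Your argument is correct and is precisely the intended one: the paper does not spell out a proof but merely remarks that the lemma ``is a direct consequence of \textbf{A1}'', and your cancellation-via-biconditional argument using the first clause of \textbf{A1} together with antisymmetry of $\leq_D$ is exactly that direct consequence. Nothing to add.
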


We first note some useful algebraic properties that immediately follow from Lemma \ref{subtraction} and the definition of an additive conditional plausibility structure (Definition \ref{AdditiveConditionalPlausibilityStructure}).

\begin{lemma}
   Suppose that   $\langle W, \mathcal{F}, \mathcal{F}', (Pl_i)_{i\in\mathcal{A}}\rangle$ is an additive conditional plausibility structure on a set $D$ of plausibility values.  Then, 
\begin{enumerate}
    \item For all $a, b\in D$, $\top \ominus a\leq \top\ominus b$ iff $a\geq b$ , 
    \item For all $a, b, c, d\in D$, if $a\leq b\leq c\leq d$ and  both $c \ominus b$ and $d \ominus a$ exist, then $c \ominus b\leq a \ominus d$,
    \item If $\oplus$ is {\bf associative} (for all $a, b, c\in D$, $a\oplus (b\oplus c)=(a\oplus b)  \oplus c$), then for all $a, b\in D$, $$(\top\ominus a) \ominus b=\top \ominus (a\oplus b).$$ \end{enumerate}
\end{lemma}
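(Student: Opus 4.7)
The plan is to derive all three statements from A1 (monotonicity of $\oplus$), A2 (existence of additive inverses), and the uniqueness clause of Lemma~\ref{subtraction}. Throughout, the key observation is that because $\oplus$ is only partial, membership in $dom(\oplus)$ must be tracked alongside the comparison inequality; the biconditionals in A1 deliver both simultaneously. In particular, if $p\oplus q=\top$, then A1 applied in each of its two forms to the pair $\langle p,q\rangle$ yields the useful characterisations $c\leq q \iff \langle p,c\rangle\in dom(\oplus)$ and $c\leq p \iff \langle c,q\rangle\in dom(\oplus)$, since the inequalities $p\oplus c\leq\top$ and $c\oplus q\leq\top$ are automatic.

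For Part 1, I would apply the observation above to the pair $\langle a,\top\ominus a\rangle$ (which is in $dom(\oplus)$ by A2 and Lemma~\ref{subtraction}) to get $c\leq a \iff \langle c,\top\ominus a\rangle\in dom(\oplus)$, and to the pair $\langle b,\top\ominus b\rangle$ to get $c\leq\top\ominus b \iff \langle b,c\rangle\in dom(\oplus)$. Chaining these with $c=\top\ominus a$ yields $\top\ominus a\leq\top\ominus b \iff \langle b,\top\ominus a\rangle\in dom(\oplus) \iff b\leq a$, which is exactly the desired biconditional.

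For Part 2 (reading the conclusion as $c\ominus b\leq d\ominus a$, since the printed $a\ominus d$ appears to be a typo, as $a\ominus d$ is not defined when $a\leq d$), let $x=c\ominus b$ and $y=d\ominus a$, so $b\oplus x=c$ and $a\oplus y=d$. Since $a\leq b$, A1's second form applied to $\langle b,x\rangle\in dom(\oplus)$ transports both the membership and the inequality: $\langle a,x\rangle\in dom(\oplus)$ and $a\oplus x\leq c$. Combined with $c\leq d=a\oplus y$, A1's first form applied to $\langle a,y\rangle\in dom(\oplus)$ then yields $x\leq y$. For Part 3, letting $u=\top\ominus(a\oplus b)$ gives $(a\oplus b)\oplus u=\top$, which associativity rewrites as $a\oplus(b\oplus u)=\top$; the uniqueness clause of Lemma~\ref{subtraction} forces $b\oplus u=\top\ominus a$, and a second application of uniqueness yields $u=(\top\ominus a)\ominus b$.

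The main obstacle, to the extent there is one, is the partiality of $\oplus$: without A1's biconditional form one would have to verify each $dom(\oplus)$-membership as a separate side condition, which would clutter the argument significantly. Exploiting A1 to extract these memberships in tandem with the order comparisons is what keeps all three proofs essentially one-line chains of equivalences.
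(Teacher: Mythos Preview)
Your proposal is correct and matches the paper's intended approach: the paper states only that these properties ``immediately follow from Lemma~\ref{subtraction} and the definition of an additive conditional plausibility structure'' without giving an explicit argument, and your proof unpacks exactly those ingredients (\textbf{A1}, \textbf{A2}, and the uniqueness clause of Lemma~\ref{subtraction}). Your reading of Part~2 as $c\ominus b\leq d\ominus a$ rather than the printed $a\ominus d$ is also the correct correction.
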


To motivate our generalizations, we first present a new characterization of the original MSN-Theorem.

\begin{theorem}
\label{CharacterizationOfMSN-Theorem}
    Given an additive conditional plausibility structure with common prior $\mathcal{M}=\langle W,\mathcal{F},\mathcal{F}^\prime,Pl\rangle$, the following are equivalent. 
    \begin{enumerate}
        \item For all additive epistemic plausibility models based on $\mathcal{M}$, if the agents commonly $d$-believe their posteriors of an event $E\in\mathcal{F}$, then their posteriors differ by less than $\top \ominus d$ (if the difference exists). 
        \item For all $A,B\in\mathcal{F^\prime}$, if $Pl(A\mid B)\geq d$ and $Pl(B\mid A)\geq d$, then for any event $E\in\mathcal{F}$, if $\top\ominus d$ exists, then  $$Pl(E\mid A)\ominus Pl(E\mid B)\leq \top \ominus d.$$ 
    \end{enumerate}
\end{theorem}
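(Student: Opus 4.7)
The plan is to prove the two implications $(2)\Rightarrow(1)$ and $(1)\Rightarrow(2)$ separately, with both directions exploiting a plausibility analogue of Lemma~\ref{UsefulLemma1}, whose proof carries over using \textbf{CP1}--\textbf{CP4} and \textbf{A4}.

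For $(2)\Rightarrow(1)$, I would fix an additive epistemic plausibility model based on $\mathcal{M}$ and assume $CB^d(X)\neq\emptyset$ for $X=\{w\mid Pl_{i,w}(E)=r_i \text{ for all } i\in\mathcal{A}\}$. Fix two agents $i,j\in\mathcal{A}$ and set $A:=B^d_iCB^d(X)$ and $B:=B^d_jCB^d(X)$. The plausibility version of Lemma~\ref{UsefulLemma1} then yields $CB^d(X)\subseteq A\cap B$, that $Pl(CB^d(X)\mid A)\geq d$ and $Pl(CB^d(X)\mid B)\geq d$ (via \textbf{A4} and the partition property), and that $Pl(E\mid A)=r_i$ and $Pl(E\mid B)=r_j$. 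Monotonicity (\textbf{CP3}) then upgrades the first pair of inequalities to $Pl(A\mid B)\geq d$ and $Pl(B\mid A)\geq d$, so hypothesis (2) applies and gives $r_i\ominus r_j = Pl(E\mid A)\ominus Pl(E\mid B)\leq \top\ominus d$.

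For $(1)\Rightarrow(2)$, given $A,B\in\mathcal{F}'$ with $Pl(A\mid B)\geq d$ and $Pl(B\mid A)\geq d$ and $E\in\mathcal{F}$, first note that by \textbf{CP2} and \textbf{CP4} the case $A\cap B=\emptyset$ forces $d=\bot$, making the conclusion vacuous; so assume $A\cap B\neq\emptyset$. I would then build a two-agent additive epistemic plausibility model based on $\mathcal{M}$ with partitions $\Pi_1,\Pi_2$ chosen so that $\Pi_1(w)=A$ for every $w\in A$ and $\Pi_2(w)=B$ for every $w\in B$. Writing $r_1:=Pl(E\mid A)$, $r_2:=Pl(E\mid B)$, and $X=\{w\mid Pl_{1,w}(E)=r_1 \text{ and } Pl_{2,w}(E)=r_2\}$, we immediately get $A\cap B\subseteq X$. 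The hypotheses translate (via \textbf{CP4}) to $Pl(A\cap B\mid A)\geq d$ and $Pl(A\cap B\mid B)\geq d$, so $A\cap B$ is $d$-self-evident; combined with $A\cap B\subseteq X$ and \textbf{CP3}, this yields $A\cap B\subseteq CB^d(X)$, whence $CB^d(X)\neq\emptyset$. Statement (1) then gives $r_1\ominus r_2\leq \top\ominus d$, as required.

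The main obstacle is ensuring that the two-agent model constructed in $(1)\Rightarrow(2)$ is a bona fide additive epistemic plausibility model based on $\mathcal{M}$. Since the partition cells must lie in $\mathcal{F}'$, cells complementary to $A$ and $B$ within the ambient universe are not automatically measurable in the sense of being conditioning events. The cleanest workaround is to pass to the submodel on $W^*:=A\cup B$ and to enlarge $\mathcal{F}'$ to include $A\setminus B$ and $B\setminus A$, extending $Pl$ on these new conditioning events in a way that preserves \textbf{CP1}--\textbf{CP4} and \textbf{A1}--\textbf{A4}; once this bookkeeping is handled, the rest of the argument reduces to a routine application of additivity and monotonicity.
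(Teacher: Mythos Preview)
The extended abstract does not include a proof of Theorem~\ref{CharacterizationOfMSN-Theorem}; the paper explicitly defers it to the full version. So I can only evaluate your proposal on its own terms.

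Your $(2)\Rightarrow(1)$ direction is sound and is the natural argument: with $A=B^d_iCB^d(X)$ and $B=B^d_jCB^d(X)$, the plausibility analogue of Lemma~\ref{UsefulLemma1} (which, as the paper remarks, follows from \textbf{A4}) delivers $CB^d(X)\subseteq A\cap B$, $Pl(CB^d(X)\mid A),\,Pl(CB^d(X)\mid B)\ge d$, and $Pl(E\mid A)=r_i$, $Pl(E\mid B)=r_j$; then \textbf{CP3} upgrades the first pair to $Pl(B\mid A),\,Pl(A\mid B)\ge d$ and (2) finishes.

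For $(1)\Rightarrow(2)$, the two-agent construction is the right idea, and your verification that $A\cap B$ is $d$-self-evident and contained in $B^d(X)$ is correct. The problem is your proposed workaround for the measurability obstacle. Hypothesis~(1) speaks only about epistemic plausibility models \emph{based on $\mathcal{M}$}, i.e., models whose underlying conditional plausibility structure is exactly $\langle W,\mathcal{F},\mathcal{F}',Pl\rangle$. If you restrict to $W^*=A\cup B$, enlarge $\mathcal{F}'$ to admit $A\setminus B$ and $B\setminus A$ as conditioning events, and extend $Pl$ accordingly, the resulting model is based on some modified structure $\mathcal{M}^*\neq\mathcal{M}$, and (1) tells you nothing about it. So the implication, as you have written it, does not close.

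To repair this you must build the partitions inside $\mathcal{M}$ itself. Since $\mathcal{F}'$ is only assumed upward closed, $W\setminus A$ and $W\setminus B$ need not lie in $\mathcal{F}'$, and without them the simple partitions $\{A,\,W\setminus A\}$, $\{B,\,W\setminus B\}$ are unavailable. Either the theorem tacitly requires a richness condition on $\mathcal{F}'$ (for instance, that complements of sets in $\mathcal{F}'$ again lie in $\mathcal{F}'$, or that $\mathcal{F}'=\mathcal{F}\setminus\{\text{null sets}\}$), or ``based on $\mathcal{M}$'' is meant more liberally than Definition~\ref{ep-plaus-model} suggests. You should make that assumption explicit and argue under it, rather than altering $\mathcal{M}$ and then invoking (1) on a model to which (1) no longer applies.
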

\noindent Condition 1 is the MSN-theorem. Our strategy to generalize the MSN theorem is to ensure that condition 2 holds for appropriate choices of $A$ and $B$.

Our first generalization of the MSN-Theorem follows the original proof given in \cite{MondererSamet1989} and \cite{Neeman1996}. This proof makes essential use of a property involving the \emph{multiplication} of plausibility values. Therefore, before stating the theorem, we introduce a multiplication operation on the set of plausibility values.

 \begin{definition}[Conditional Plausibility Structure with Addition and Multiplication] 
\label{MutliplicativeConditionalPlausibilityStructure} An additive conditional plausibility structure $\langle W, \mathcal{F}, \mathcal{F}', (Pl_i)_{i\in\mathcal{A}}\rangle$  has a multiplication when there is a (partial) function $\otimes:D\times D\to D$ such that: 
\begin{description}

    \item[({\bf M1})] For all $a,b,c\in D$, if $\langle a, b\rangle\in dom(\otimes)$, $\langle a, c\rangle\in dom(\otimes)$, and  $b\geq c$, then $a\otimes b\geq a\otimes c$ ,
    \item[({\bf M2})] For all $a,b,c\in D$, if $a\oplus b$ is defined, then $(a\oplus b)\otimes c = (a\otimes c)\oplus (b\otimes c)$
    \item[(M3)] For all $i\in\mathcal{A}$, $X\in\mathcal{F} $ and $Y, Z \in \mathcal{F}^\prime$, if $X\subseteq Y\subseteq Z$, then $Pl_i(X\mid Z)=Pl_i(X\mid Y)\otimes Pl_i(Y\mid Z)$ 

\end{description}
 
\end{definition}

Property \textbf{M1} requires that the multiplication operation $\otimes$ is monotonic, while property \textbf{M2} requires that $\otimes$ distributes over the addition operation $\oplus$. The essential property required for the proof is  \textbf{M3}. We can now state our generalized version of the MSN-Theorem.

\begin{restatable}[Generalized MSN-Theorem with Multiplication]{theorem}{GenMSNwithMulti}
    \label{AgreementWithMultiplication} Suppose that  $\langle W, (\Pi_i)_{i\in\mathcal{A}},\mathcal{F},\mathcal{F}^\prime,Pl\rangle$ is an additive epistemic plausibility model with a common prior.  Further, suppose that the model is acceptable, the addition operator $\oplus$ is associative, and there is a  multiplication $\otimes$.   Let $E\in\mathcal{F}$, for each $i\in\mathcal{A}$,  $r_i\in D$,   and  $d\in D$.  If
    $$CB^d(\{w \mid Pl_{i,w}(E)=r_i\text{ for all }i\in\mathcal{A}\})\neq\emptyset$$    
\noindent then for all $i,j\in\mathcal{A}$, if $r_i\ominus r_j$ exists, then $r_i\ominus r_j \leq \top \ominus(\top \otimes d)$.

\end{restatable}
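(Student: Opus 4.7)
The plan is to mimic the Neeman-style refinement of the original MSN proof, but with scalar multiplication replaced by $\otimes$ and careful use of the associativity of $\oplus$. First, I would set $X = \{w \mid Pl_{i,w}(E) = r_i \text{ for all } i\}$, $F = CB^d(X)$, and $G_k = B_k^d F$ for each $k \in \mathcal{A}$. Invoking the plausibility analogue of Lemma \ref{UsefulLemma1} (whose proof adapts by using A4 in place of countable probability additivity), I extract the three key facts: $F \subseteq G_k$, $Pl(F \mid G_k) \geq d$, and $Pl(E \mid G_k) = r_k$. Applying M3 to the chain $E \cap F \subseteq F \subseteq G_k$ then yields $Pl(E \cap F \mid G_k) = q \otimes \alpha_k$, where $q := Pl(E \cap F \mid F)$ is independent of the agent and $\alpha_k := Pl(F \mid G_k) \geq d$.

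Next, I would use A3 on the disjoint union $E \cap G_k = (E \cap F) \sqcup (E \cap (G_k \setminus F))$ to write $r_k = (q \otimes \alpha_k) \oplus \gamma_k$, where $\gamma_k \leq \top \ominus \alpha_k$ by monotonicity and A3. The lower bound $r_k \geq q \otimes \alpha_k \geq q \otimes d$ follows immediately from M1. For the upper bound, I would introduce $q' := \top \ominus q$ (which exists by A2) and prove the key algebraic identity $(q \otimes \alpha_k) \oplus (\top \ominus \alpha_k) = \top \ominus (q' \otimes \alpha_k)$. To establish it, expand $\alpha_k = \top \otimes \alpha_k = (q \oplus q') \otimes \alpha_k = (q \otimes \alpha_k) \oplus (q' \otimes \alpha_k)$ via M2 (using $\top \otimes \alpha_k = \alpha_k$, a special case of M3), and then rearrange both sides of $\top = \alpha_k \oplus (\top \ominus \alpha_k)$ using associativity of $\oplus$ and uniqueness of subtraction. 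Combined with M1 and the fact that $\ominus$ reverses order, this yields $r_k \leq \top \ominus (q' \otimes d)$.

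With both $r_i$ and $r_j$ lying in the interval $[\,q \otimes d,\ \top \ominus (q' \otimes d)\,]$, I would conclude via part 2 of the preceding algebraic lemma that $r_i \ominus r_j \leq (\top \ominus (q' \otimes d)) \ominus (q \otimes d)$; part 3 of that lemma, combined with M2 applied to $q \oplus q' = \top$, simplifies the right-hand side to $\top \ominus ((q' \otimes d) \oplus (q \otimes d)) = \top \ominus (\top \otimes d)$, as desired. The main obstacle will be the algebraic bookkeeping around the identity $(q \otimes \alpha_k) \oplus (\top \ominus \alpha_k) = \top \ominus (q' \otimes \alpha_k)$: because $\oplus$ is a partial operation, each rearrangement via associativity (and the implicit commutativity of $\oplus$ derivable from the symmetry of disjoint union in A3) must be verified to stay within $\mathrm{dom}(\oplus)$. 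A secondary concern is confirming the measurability of $F$, $G_k$, and $G_k \setminus F$ in the appropriate Popper algebra, so that the plausibility values invoked (especially $Pl(F \mid G_k)$ and $Pl(E \cap F \mid F)$) are well-defined; acceptability of the model is what makes this work.
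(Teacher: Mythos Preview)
Your proposal is correct and aligns with the approach the paper advertises: it follows the original Monderer--Samet--Neeman argument, replacing the probabilistic chain rule $P(A\mid C)=P(A\mid B)\cdot P(B\mid C)$ by \textbf{M3}, invoking the plausibility analogue of Lemma~\ref{UsefulLemma1} (via \textbf{A4}) to obtain $Pl(E\mid G_k)=r_k$ and $Pl(F\mid G_k)\geq d$, and then closing with the algebraic subtraction lemmas the paper records. Your bookkeeping concerns are the right ones; note in particular that the ``commutativity from \textbf{A3}'' you invoke is only available for values realized as $Pl(X\mid Z),Pl(Y\mid Z)$ with $X,Y$ disjoint, but every instance you need (e.g., $q\otimes\alpha_k=Pl(E\cap F\mid G_k)$ and $q'\otimes\alpha_k=Pl(F\setminus E\mid G_k)$) is of exactly that form, and acceptability together with $Pl(F\mid G_k)\geq d$ yields $F\in\mathcal{F}'$ so that $q=Pl(E\cap F\mid F)$ is well-defined.
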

\begin{remark}
    The antecedent of this theorem implies Condition 2 of Theorem \ref{CharacterizationOfMSN-Theorem}.
\end{remark}
 
There is a second generalization of the MSN-Theorem that does not require a multiplication operation. Instead, it relies on two additional properties of an additive conditional plausibility model with a common prior $\langle W, (\Pi_i)_{i\in\mathcal{A}},\mathcal{F},\mathcal{F}^\prime, Pl \rangle$.\footnote{We are numbering them as ({\bf CP6}) and ({\bf CP7}) to avoid confusion with Halpern's ({\bf CP5}) from  \cite{Halpern2001,Halpern2017}.} 
\begin{description}
    \item[({\bf CP6})] For all $E_1,E_2\in\mathcal{F}$ and $A_1,A_2\in \mathcal{F}^\prime$, if $E_1, E_2 \subseteq A_1\cap A_2$ and $\bot\neq Pl(E_1\mid A_1)\geq Pl(E_1\mid A_2)$, then $Pl(E_2\mid A_1)\geq Pl(E_2\mid A_2)$ ,
    \item[({\bf CP7})] For all $E\in\mathcal{F}$ and $i,j\in\mathcal{A}$, either $$Pl(E\mid B_i^p(E))\leq Pl(E\mid B_j^p(E))\mbox{ or }Pl(E\mid B_j^p(E))\leq Pl(E\mid B_i^p(E)).$$
\end{description}
\noindent Property {\bf CP6} captures the following idea: if $i\in\mathcal{A}$, $E_1 \in A_1 \cap A_2$ and $\bot \neq Pl_i(E_1 \mid A_1) \geq Pl_i(E_1 \mid A_2)$, then $A_1$ must be considered ``less plausible'' than $A_2$. Moreover, if $A_1$ is less plausible than $A_2$, then it follows that $Pl(E_2 \mid A_1) \geq Pl(E_2 \mid A_2)$ for all events $E_2 \in A_1 \cap A_2$. Property {\bf CP7} requires that the reliabilities of beliefs held by any two agents can always be compared.   The relationship between {\bf M1}-{\bf M3} and {\bf CP6}-{\bf CP7} is clarified in Appendices \ref{mult-imp-cp6} and \ref{independence-add-mult-axioms}.

We are now ready to state our main theorem: a version of the MSN-Theorem that does not rely on multiplication.

\begin{restatable}[Generalized MSN-Theorem without Multiplication]{theorem}{GenMSNwithoutMulti}
\label{AgreementWithoutMultiplication} Suppose that $\langle W, (\Pi_i)_{i\in\mathcal{A}},\mathcal{F},\mathcal{F}^\prime,Pl\rangle$ is an additive epistemic plausibility model with a common prior that satisfies {\bf CP6} and {\bf CP7}.  Let $E\in\mathcal{F}$, for each $i\in\mathcal{A}$,  $r_i\in D$,   and  $d\in D$.  If
$$CB^d(\{w \mid Pl_{i,w}(E)=r_i \text{ for all }i\in\mathcal{A}\})\neq \emptyset$$ 
then for all $i,j\in\mathcal{A}$, if $ r_i\ominus r_j $ exists, then $(r_i\ominus r_j) \leq (\top \ominus d).$
\end{restatable}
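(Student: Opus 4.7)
The plan is to show that for every pair $i, j \in \mathcal{A}$ the inequality $r_i \leq r_j \oplus (\top \ominus d)$ holds; once this is established, whenever $r_i \ominus r_j$ exists (say $r_i = r_j \oplus c$), the monotonicity clause of \textbf{A1} forces $c \leq \top \ominus d$, which is the required bound. (The degenerate case $d = \bot$ is immediate since $\top \ominus \bot = \top$.)

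First I would set $X = \{w \mid Pl_{i,w}(E) = r_i \text{ for all } i \in \mathcal{A}\}$ and $Y = CB^d(X)$, which is nonempty by hypothesis and $d$-self-evident, so $Y \subseteq B_i^d Y$ for every $i$. Adapting Lemma \ref{UsefulLemma1} to the plausibility setting---replacing the probabilistic ``weighted average'' argument by a single appeal to \textbf{A4}---gives $Pl(E \mid B_i^d Y) = r_i$ and $Pl(Y \mid B_i^d Y) \geq d$ for every $i$. Applying \textbf{A3} to the partition $W = Y \cup \overline{Y}$ yields $Pl(\overline{Y} \mid B_i^d Y) = \top \ominus Pl(Y \mid B_i^d Y) \leq \top \ominus d$, and then by \textbf{CP3}, $Pl(E \setminus Y \mid B_i^d Y) \leq Pl(\overline{Y} \mid B_i^d Y) \leq \top \ominus d$.

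The core step uses \textbf{CP7} to compare $Pl(Y \mid B_i^d Y)$ with $Pl(Y \mid B_j^d Y)$, and then invokes \textbf{CP6} with $E_1 = Y$ and a \emph{case-dependent} $E_2 \subseteq Y$. If $Pl(Y \mid B_i^d Y) \leq Pl(Y \mid B_j^d Y)$, I take $E_2 = E \cap Y$; \textbf{CP6} yields $Pl(E \cap Y \mid B_i^d Y) \leq Pl(E \cap Y \mid B_j^d Y)$, and then the decomposition $r_i = Pl(E \cap Y \mid B_i^d Y) \oplus Pl(E \setminus Y \mid B_i^d Y)$ (via \textbf{A3}), combined with the upper bound $\top \ominus d$ on the second summand and $Pl(E \cap Y \mid B_j^d Y) \leq r_j$ (by \textbf{CP3}), gives $r_i \leq r_j \oplus (\top \ominus d)$. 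In the opposite case, I take $E_2 = Y \setminus E$; \textbf{CP6} gives $Pl(Y \setminus E \mid B_i^d Y) \geq Pl(Y \setminus E \mid B_j^d Y)$. From $r_i \oplus Pl(Y \setminus E \mid B_i^d Y) = Pl(E \cup Y \mid B_i^d Y) \leq \top$ and \textbf{A1} I extract $r_i \leq \top \ominus Pl(Y \setminus E \mid B_i^d Y)$; the order-reversal property (Lemma 4.3, part 1) gives $\top \ominus Pl(Y \setminus E \mid B_i^d Y) \leq \top \ominus Pl(Y \setminus E \mid B_j^d Y)$; and rewriting the last quantity via $\overline{Y \setminus E} = E \cup \overline{Y}$ as $r_j \oplus Pl(\overline{Y} \setminus E \mid B_j^d Y) \leq r_j \oplus (\top \ominus d)$ closes the chain.

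The main obstacle is the asymmetric shape of the argument. \textbf{CP7} only provides a one-directional comparison of $Pl(Y \mid B_i^d Y)$ and $Pl(Y \mid B_j^d Y)$, and applying \textbf{CP6} with a fixed choice of $E_2$ would bound either $r_i$ above by $r_j \oplus (\top \ominus d)$ or $r_j$ above by $r_i \oplus (\top \ominus d)$, but never both uniformly over ordered pairs $(i, j)$. The resolution is to let the case dictate $E_2$: $E \cap Y$ when $B_i^d Y$ is ``less $Y$-weighted,'' and $Y \setminus E$ when it is ``more $Y$-weighted,'' exploiting that \textbf{CP6} applies to any subset of $Y$. Secondary items---measurability of $Y$, $\overline{Y}$, $B_i^d Y$, $E \cap Y$, and $Y \setminus E$ in the appropriate algebras, and the commutativity plus restricted associativity of $\oplus$ on the range of $Pl$ that \textbf{A3} already supplies---are routine bookkeeping.
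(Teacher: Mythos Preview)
Your proposal is correct and follows essentially the same route the paper sketches: the plausibility analogue of Lemma~\ref{UsefulLemma1} (via \textbf{A4}) to get $Pl(E\mid B_i^dY)=r_i$ and $Pl(Y\mid B_i^dY)\ge d$, then \textbf{CP7} to compare $Pl(Y\mid B_i^dY)$ with $Pl(Y\mid B_j^dY)$, and \textbf{CP6} to transfer that comparison to subsets of $Y$, exactly as Lemma~\ref{CP6InStandardProbabilitySetting} does in the probabilistic case. The only organizational difference is that the paper's comparison lemma applies \textbf{CP6} to \emph{both} $E\cap Y$ and $Y\setminus E$ simultaneously (yielding both $r_i\ominus r_j\le\top\ominus d$ and $r_j\ominus r_i\le\top\ominus d$ from a single \textbf{CP7} case), whereas you fix the ordered pair $(i,j)$ and pick one $E_2$ per case; these are equivalent packagings of the same computation, so your ``main obstacle'' is not really an obstacle.
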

\begin{remark}
    {\bf CP6} - {\bf CP7} guarantee that Condition 2 of Theorem \ref{CharacterizationOfMSN-Theorem} is satisfied for suitable A and B.
\end{remark}
\begin{remark}
\label{Bot+Bot}
The proofs of Theorems \ref{AgreementWithMultiplication} and \ref{AgreementWithoutMultiplication} do not depend on whether condition {\bf A3} holds when $Pl_i(X\mid Z)=Pl_i(Y\mid Z)=\bot$. This observation is useful in some applications.
\end{remark}

Both versions of the MSN-Theorem are important. Some additive epistemic plausibility structures satisfy conditions {\bf CP6} - {\bf CP7} but do not allow for a multiplication operation satisfying {\bf M1} - {\bf M3}. Conversely, other structures have a suitable multiplication operation $\otimes$ satisfying {\bf M1} - {\bf M3} but fail to satisfy conditions {\bf CP6} - {\bf CP7}. Nevertheless, certain natural conditions on multiplication imply property {\bf CP6}. The proofs of both generalizations of the MSN-Theorem are available in the full version of the paper.

\section{Conclusion and Applications}\label{applications}\label{conclusion}

This extended abstract presents two generalizations of the Monderer–Samet–Neeman (MSN) agreement theorem, using additive conditional plausibility measures. These measures provide a unified framework that incorporates many existing formal models of belief. The first generalization closely follows Monderer, Samet, and Neeman's original proof, relying on a multiplication operation, while the second generalization avoids this requirement. Together, these two results clarify the minimal assumptions required to establish MSN-type agreement theorems in various formal belief models.

To illustrate the significance of these assumptions, we briefly examine several existing models of belief (a complete discussion is left to the full version of the paper). Our generalized MSN-theorems apply, in particular, to conditional probability structures \cite{Tsakas2018}, lexicographic probability structures  \cite{BachPerea2013,BachCabessa2023}, and order models of belief \cite{DegremontRoy2012}. When these conditions are not satisfied, the corresponding MSN-theorem typically fails or becomes trivial for the given belief model.

\bibliographystyle{eptcs}
\bibliography{common-p-belief-plausibility-measures}

 \newpage 

\appendix

\section{Multiplication Implies ({\bf CP6})}\label{mult-imp-cp6}

\begin{proposition}
    Given an additive epistemic plausibility model with common prior $\langle W, (\Pi_i)_{i\in\mathcal{A}},\mathcal{F},\mathcal{F}^\prime,Pl\rangle$. If there exists a function $\otimes:D\times D\to D$ that satisfies ({\bf M1}), ({\bf M3}) and 
    \begin{description}
        \item[({\bf M4})] For all $a,b,c\in D$, if $a\neq \bot$ and $a\otimes b\geq a\otimes c$, then 
        $b\geq c$. 
    \end{description}
    Then ({\bf CP6}) is satisfied. 
\end{proposition}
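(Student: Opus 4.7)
The plan is to apply (M3) to decompose each of the four plausibilities $Pl(E_i\mid A_j)$ ($i,j\in\{1,2\}$) through the common refinement $A_1\cap A_2$, then extract the key inequality between $Pl(A_1\cap A_2\mid A_1)$ and $Pl(A_1\cap A_2\mid A_2)$ using the cancellation (M4), and finally spread this inequality to $E_2$ by monotonicity (M1).

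First, I would fix $E_1,E_2\in\mathcal{F}$ and $A_1,A_2\in\mathcal{F}'$ with $E_1,E_2\subseteq A_1\cap A_2$ and $\bot\neq Pl(E_1\mid A_1)\geq Pl(E_1\mid A_2)$. Since $\mathcal{F}$ is a $\sigma$-algebra, $A_1\cap A_2\in\mathcal{F}$, and the upward-closure clause of the Popper algebra lets me promote this to $A_1\cap A_2\in\mathcal{F}'$ (using that $E_1\subseteq A_1\cap A_2$ already lies in $\mathcal{F}'$ by acceptability, since $Pl(E_1\mid A_1)\neq\bot$). Applying (M3) to the chain $E_i\subseteq A_1\cap A_2\subseteq A_j$, I would obtain
\[
Pl(E_i\mid A_j)\;=\;Pl(E_i\mid A_1\cap A_2)\,\otimes\,Pl(A_1\cap A_2\mid A_j)
\]
for each $i,j\in\{1,2\}$. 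Abbreviating $\alpha_i:=Pl(E_i\mid A_1\cap A_2)$ and $\beta_j:=Pl(A_1\cap A_2\mid A_j)$, the hypothesis becomes $\bot\neq \alpha_1\otimes\beta_1\geq\alpha_1\otimes\beta_2$, and the conclusion to be proved becomes $\alpha_2\otimes\beta_1\geq\alpha_2\otimes\beta_2$.

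The cancellation step requires $\alpha_1\neq\bot$, which I would establish by contradiction: if $\alpha_1=\bot$, then applying (M3) to $\emptyset\subseteq A_1\cap A_2\subseteq A_1$ together with (CP2) gives $\bot=Pl(\emptyset\mid A_1)=\bot\otimes\beta_1$, so $\alpha_1\otimes\beta_1=\bot\otimes\beta_1=\bot$, contradicting the hypothesis. Now (M4) applied to $\alpha_1\otimes\beta_1\geq\alpha_1\otimes\beta_2$ with $\alpha_1\neq\bot$ yields $\beta_1\geq\beta_2$, and (M1) with first argument $\alpha_2$ lifts this to $\alpha_2\otimes\beta_1\geq\alpha_2\otimes\beta_2$, i.e., $Pl(E_2\mid A_1)\geq Pl(E_2\mid A_2)$, which is (CP6).

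The main obstacle will be the preliminary bookkeeping that $A_1\cap A_2\in\mathcal{F}'$, which is what makes (M3) instantiable in the middle slot; this appears to require reading in acceptability (or an equivalent closure condition) as a standing assumption, since the proposition alone does not obviously guarantee it. Once this membership is secured, the algebraic chain (M3)$\to$(M4)$\to$(M1) is short and essentially forced by the axioms, with the auxiliary lemma $\bot\otimes\beta_1=\bot$ (itself a consequence of (M3) and (CP2)) supplying the only minor gap.
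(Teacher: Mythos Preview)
Your proof is correct and follows essentially the same route as the paper: factor $Pl(E_i\mid A_j)$ through $A_1\cap A_2$ via (\textbf{M3}), argue $Pl(E_1\mid A_1\cap A_2)\neq\bot$ using (\textbf{M3}) on $\emptyset$ together with (\textbf{CP2}), cancel with (\textbf{M4}) to get $\beta_1\geq\beta_2$, and finish with (\textbf{M1}). You are in fact more careful than the paper in explicitly securing $A_1\cap A_2\in\mathcal{F}'$ via acceptability and upward closure of the Popper algebra --- the paper's proof silently assumes this.
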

\begin{proof}
    Let $E_1,E_2\in\mathcal{F}$, $A_1,A_2\in\mathcal{F}^\prime$, and $E_1,E_2\subseteq A_1\cap A_2$. 
    Assume that $\bot\neq Pl(E_1\mid A_1)\geq Pl(E_1\mid A_2)$. 
    Given ({\bf M3}), this amounts to
    $$\bot\neq Pl(E_1\mid A_1\cap A_2)\otimes Pl(A_1\cap A_2\mid A_1)\geq Pl(E_1\mid A_1\cap A_2)\otimes Pl(A_1\cap A_2\mid A_2)$$
    Now we have $Pl(E_1\mid A_1\cap A_2)\neq \bot$, since otherwise $Pl(E_1\mid A_1\cap A_2)\otimes Pl(A_1\cap A_2\mid A_1)=Pl(\emptyset\mid A_1\cap A_2)\otimes Pl(A_1\cap A_2\mid A_1)=Pl(\emptyset\mid A_1)=\bot$. 
    Then, by ({\bf M4}), we can obtain
    $$Pl(A_1\cap A_2\mid A_1)\geq Pl(A_1\cap A_2\mid A_2)$$
    which, by ({\bf M1}) implies that 
    $$Pl(E_2\mid A_1)=Pl(E_2\mid A_1\cap A_2)\otimes Pl(A_1\cap A_2\mid A_1)\geq Pl(E_2\mid A_1\cap A_2)\otimes Pl(A_1\cap A_2\mid A_2)=Pl(E_2\mid A_2)$$
\end{proof}

\section{Independence of ({\bf CP6})-({\bf CP7}) and ({\bf M1})-({\bf M3})}\label{independence-add-mult-axioms}

First, we show that even ({\bf M1})-({\bf M4}) do not imply ({\bf CP7}). 

\begin{example}
Given two standard epistemic probability structures with common prior $\langle W,(\Pi_i)_{i\in \{1,2\}},p\rangle$ and $\langle W,(\Pi_i)_{i\in \{1,2\}}, q\rangle$, where
    \begin{itemize}

        \item W=$\{w_1,w_2,w_3\}$
        \item $\Pi_1=\{\{w_1,w_2\},\{w_3\}\}$
        \item $\Pi_2=\{\{w_1\},\{w_2,w_3\}\}$
        \item $p(w_1)=p(w_2)=0.2$, $p(w_3)=0.6$
        \item $q(w_1)=q(w_2)=0.4$, $q(w_3)=0.2$
    \end{itemize}
    Then, we let $D=[0,1]\times[0,1]$, and $\leq_D,\oplus,\otimes,Pl$ be such that: For all $a_1, a_2, b_1, b_2\in [0,1]$ and $A, B\subseteq W$
    $$\langle a_1,b_1\rangle\leq_D\langle a_2,b_2\rangle\iff a_1\leq a_2 \text{ and }b_1\leq b_2$$
    $$\langle a_1, b_1\rangle\oplus\langle a_2,b_2\rangle=\langle a_1+a_2,b_1+b_2\rangle$$
    $$\langle a_1,b_1\rangle\otimes\langle a_2,b_2\rangle=\langle a_1\cdot a_2,b_1\cdot b_2\rangle$$
    $$Pl(A\mid B)=\langle p(A\mid B), q(A\mid B))\rangle$$
\noindent It can be verified that $\langle W, (\Pi_i)_{i\in\mathcal{A}}, Pl\rangle$ is an additive epistemic plausibility structure with a common prior, that $\oplus$ is associative, and that $\otimes$ satisfies ({\bf M1})-({\bf M4}).
Now, suppose that  $d=\langle 0.1,0.1\rangle$ and $E=\{w_2\}$. 
It can be verified that $Pl(E\mid B_1^p(E))=\langle \frac{1}{2},\frac{1}{2}\rangle$ and $Pl(E\mid B_2^d(E))=\langle \frac{1}{4},\frac{2}{3}\rangle$, which violates ({\bf CP7}). 
\end{example}

Now we show that ({\bf CP6})-({\bf CP7}) do not imply that ({\bf M1})-({\bf M3}) can be satisfied. 
\begin{example}
\label{CP67NotImplyMultiplication}
Given a single-agent standard epistemic probability structures with common prior $\langle W,\Pi,p\rangle$ where
    \begin{itemize}
        \item W=$\{w_1,w_2,w_3\}$
        \item $\Pi=\{W\}$
        \item $p(w_1)=p(w_2)=0.25$ and $p(w_3)=0.5$
    \end{itemize}
    Then we define $$p^\prime (A\mid B)=\left\{
    \begin{array}{l l}
     0.26    &\hspace{3mm} \text{if }A=\{w_1\}\text{ and } B=W \\
     0.24    &\hspace{3mm}  \text{if }A=\{w_2\}\text{ and } B=W \\
     0.76& \hspace{3mm} \text{if } A=\{w_1,w_3\}\text{ and }B=W \\
     0.74& \hspace{3mm} \text{if } A=\{w_2,w_3\}\text{ and }B=W \\
      p(A\mid B)   & \hspace{3mm}  \text{otherwise}
    \end{array} \right.$$
    It can be verified that $\langle W,\Pi,p^\prime\rangle$ is an additive epistemic plausibility structure with a common prior that satisfies ({\bf CP6})-({\bf CP7}). 
    However, $p^\prime (\{w_1\}\mid\{w_1,w_2\})=p^\prime (\{w_2\}\mid\{w_1,w_2\})$ but $p^\prime (\{w_1\}\mid\{w_1,w_2,w_3\})\neq p^\prime (\{w_2\}\mid\{w_1,w_2,w_3\})$, which implies that no function $\otimes$ can satisfy ({\bf M3}). 
\end{example}

\end{document}